\newtheorem{theorem}{\bf Theorem}
\def\QED{~\rule[-1pt]{5pt}{5pt}\par\medskip}
\newenvironment{proof}{{\bf Proof: \ }}{ \hfill \QED}
\newcommand{\dU}{u}
\newcommand{\Age}{A}
\newcommand{\Gen}{X}
\newcommand{\Tr}{\tau}
\newcommand{\Tol}{\epsilon}
\newcommand{\cm}{\mu}
\newcommand{\fal}{f_{a}^{l}}
\newcommand{\yal}{y_{a}^{l}}
\newcommand{\spi}{\pi}
\newcommand{\dsp}{\displaystyle}
\newcommand{\ati}[1]{{\color{purple} #1}}
\begin{document}
%
\title{
 Age-Optimal Multi-Channel-Scheduling under Energy and Tolerance Constraints}


\author{\IEEEauthorblockN{Xujin Zhou, Irem Koprulu, Atilla Eryilmaz}\\
\IEEEauthorblockA{\textit{Electrical and Computer Engineering} \\
\textit{The Ohio State University}\\
Columbus, US \\
\{zhou.2400@osu.edu, irem.koprulu@gmail.com, eryilmaz.2@osu.edu\}}}

\maketitle

\begin{abstract}
We study the optimal scheduling problem where $n$ source nodes attempt to transmit updates over $L$ shared wireless on/off fading channels to optimize their age performance under energy and age-violation tolerance constraints. Specifically, we provide a generic formulation of age-optimization in the form of a constrained Markov Decision Processes (CMDP), and obtain the optimal scheduler as the solution of an associated Linear Programming problem. We investigate the characteristics of the optimal single-user multi-channel scheduler for the important special cases of average-age and violation-rate minimization. This leads to several key insights on the nature of the optimal allocation of the limited energy, {\color{blue}where a usual threshold-based policy does not apply and will be useful in guiding scheduler designers}. We then investigate the stability region of the optimal scheduler for the multi-user case. We also develop an online scheduler using Lyapunov-drift-minimization methods that do not require the knowledge of channel statistics. Our numerical studies compare the stability region of our online scheduler to the optimal scheduler to reveal that it performs closely with unknown channel statistics.  
\end{abstract}


\IEEEpeerreviewmaketitle

\section{Introduction}
In recent years, the Internet of Things (IoT) has become one of the most important frameworks of the next-generation wireless networks, whereby a large number of mobile devices need to be supported over an ultra-wide frequency spectrum (see, for example, \cite{federal2016fcc}). In particular, for many real-time IoT applications, it is necessary for the devices to send \textit{fresh} updates over the shared spectrum. To measure the freshness of data, the concept of \textit{Age of Information (AoI)} has been introduced over the last decade (see, for example, \cite{kaul2011minimizing,kaul2012real,kosta2017age}), which is defined concisely as the elapsed time since the generation time of the last received status update. Since the introduction of the AoI metric, numerous related studies emerged in various networking scenarios, including wireless random access networks (e.g., \cite{chen2020age,Zhou2022Efficient}), content distribution networks (e.g., \cite{abolhassani2022fresh,liu2019proactive}), scheduling (e.g., \cite{tang2019scheduling,sombabu2018age,li2020age,jhunjhunwala2018age,han2020joint}), queuing networks (e.g., \cite{hu2021status,pappas2019delay}), and vehicular networks (e.g., \cite{chen2020minimizing}). 

Recently, other AoI related metrics have been developed in order to address more generalized or different forms of ageing, such as: non-linear AoI (e.g., \cite{kosta2017age,sun2019sampling}), peak AoI (e.g., \cite{emara2020spatiotemporal}), time-since-last-service (e.g., \cite{li2017emulating}), age upon decisions (e.g., \cite{dong2019age}), to name a few. Among them, the metric, called the \textit{age-violation-rate} (see \cite{Li2021Soft,pappas2019delay,song2021analysis}) is of particular interest for real-time IoT services that have hard age-deadline constraints and a limited tolerance to violating this deadline (see \cite{devassy2018delay,hu2020optimal} for further motivation of this metric). 

In view of its significance for next generation IoT networks, in this paper, we study the general optimal multi-channel scheduling problem to optimize varying forms of age performances under energy and age-violation tolerance constraints. 
Our contributions can be listed as:
\begin{itemize}
    \item  We provide a generic formulation of age-optimization as a Constrained Markov Decision Problem (CMDP) (see \cite{altman1999constrained,dolgov2005stationary,satija2020constrained}) and obtain the age-optimal multi-channel scheduler as the solution of an associated Linear Programming problem, first for the single-source (in Section~\ref{sec:single}) and then for general the multi-source (in Section~\ref{sec:multi}) scenarios.
    \item For the single-source multi-channel scenario, we investigate the characteristics of the optimal schedulers under energy constraints for two age metrics that are important for IoT applications: (i) \textit{average-age} minimization; and (ii) \textit{age-violation-rate} minimization, {\color{blue}a non-convex/concave metric} (in Section~\ref{single:insight}). Our investigations reveal various insights on {\color{blue}different energy allocation} structures, as well as the common monotonicity properties of the optimal schedulers for minimizing these two metrics, {\color{blue}which is useful for guiding scheduler designers.}  
    \item For the multi-source age optimal scheduling problem, we also study the feasibility region of the average-age-optimal scheduler under age-violation-rate tolerance constraints to contrast its results with those of related earlier works that are developed for the single-channel multi-user scenario (see Section~\ref{sec:U2opt} and Section~\ref{sec:compare}). 
    \item Moreover, we develop (in Section~\ref{sec:online}) an \textit{online} scheduler using Lyapunov-drift-minimization methods (e.g., \cite{neely2010stochastic}) that does not require the knowledge of channel statistics, and compare its performance to the optimal and earlier designs to reveal how much the knowledge of channel statistics affects the feasibility region (see  Section~\ref{sec:compare}).
\end{itemize}

{\color{blue}Our work relates to, but also differs from several other related works in this domain. Many early works (e.g., \cite{tang2019scheduling,jhunjhunwala2018age,hsu2018age}) aim to minimize AoI under power constraints but with the assumption of reliable channels as opposed to the fading channels that we consider.
More recent works (e.g., \cite{sombabu2018age,tripathi2017age}) aim to minimize AoI-related costs based on max-age matching, while other works (e.g., \cite{zou2021minimizing,hsu2018age}) proposed AoI minimization schedulers based on \emph{Whittle Index} approach. However, to the best of our knowledge, prior works predominantly assume that one source can choose at most one channel, which is an important factor in proving the \textit{Whittle Indexability} of the corresponding problems they solve. In contrast, one of the key features our setting is the possibility of each user to transmit over multiple channels as enabled by new wireless technologies. Furthermore, most of the above mentioned works have average or peak AoI as the objective function, while we consider more general age-based objective functions, which for example allows the objective function to be a non-convex metric such as the age-violation-rate.
In this multi-channel setting with general objectives, we observe (cf. Section~\ref{single:insight}) that the optimal solution can in fact possess \textit{non-monotone} characteristics, which make the Whittle Indexability approach infeasible in general.}
The work in \cite{Li2021Soft} has considered the multi-source single-channel scheduling problem under tolerance constraints, which is a special case of our setting. We would like to note that this interesting work \cite{Li2021Soft} has been a primary motivation for our current work in exploring a different approach based on the CMDP framework that guarantees optimality and applies to more general multi-channel scenarios with additional energy constraints. There are also works (e.g., \cite{elgabli2019reinforcement,li2020learning}) that focus on learning-based approaches which can be considered as complementary to the focus of this work.

\section{System Model}

We consider the operation of a discrete-time wireless access system, whereby $n$ source nodes share $L$ 
on/off fading wireless channels to update their ageing status at a  receiver (such as a base station) under energy and violation tolerance constraints (see Figure~\ref{fig:model}). 

\begin{figure}[h]
\begin{center}
\includegraphics[height=4.5cm]{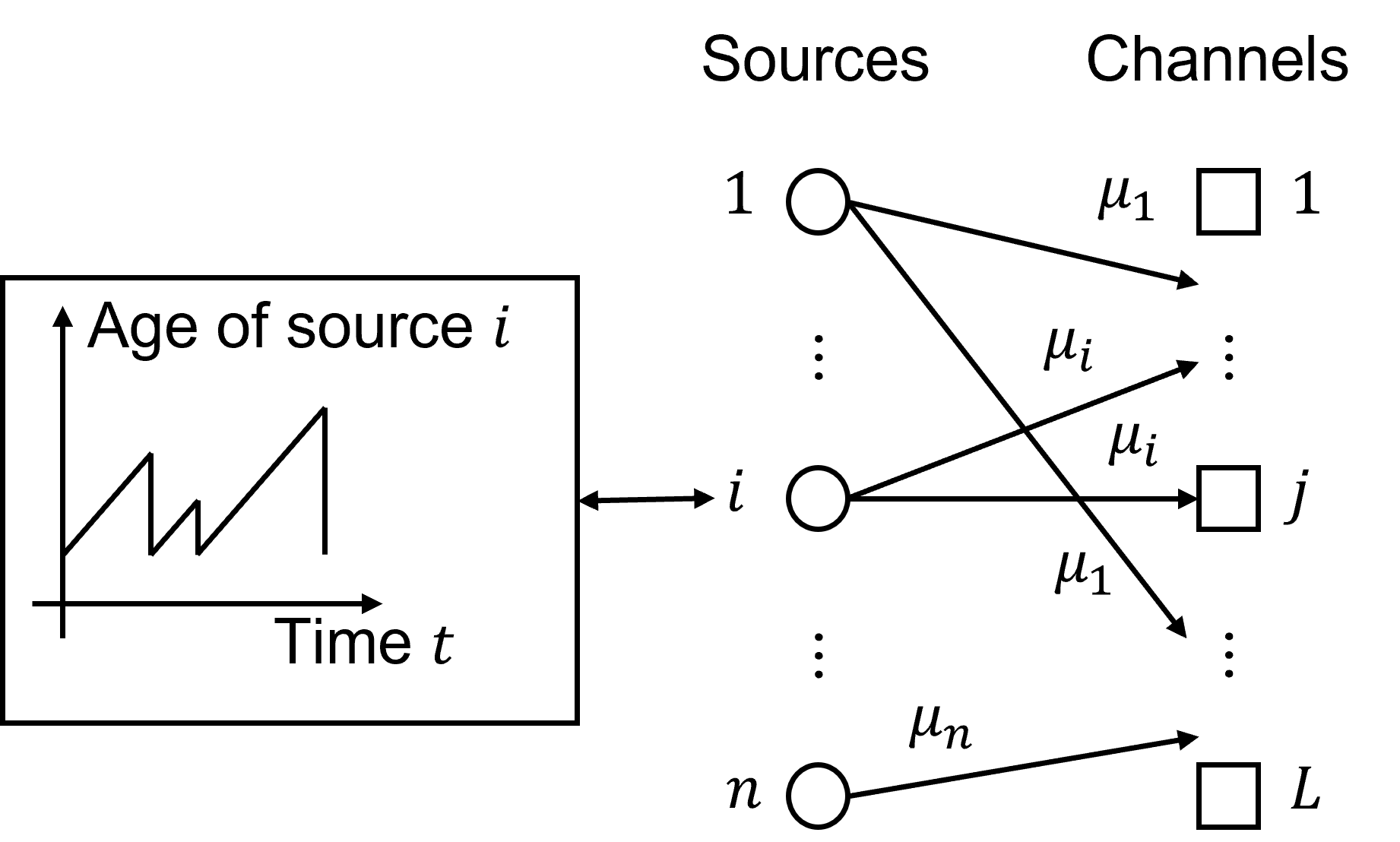}
\caption{$n$ sources share $L$ on-off fading channels to update their status to a receiver under energy and tolerance constraints in order to keep their age levels low.}
\label{fig:model}
\end{center}
\end{figure}

Our goal is to develop generic solution strategies to find optimal schedulers that can optimize diverse age-based metrics while meeting certain requirements on energy consumption and tolerance levels. We describe the key terminology and the essential system dynamics in the rest of this section. Then, in the following sections we formulate and solve classes of age-optimization problems for single and multi-source cases, subsequently.

\noindent \textbf{Scheduling policy and age-violation-tolerance:} We assume that each source node $i\in \{1, \cdots,n\}$ refreshes its status and creates a new packet at the beginning of every time slot $t\in \{1,2,3,\cdots$\}. Source nodes attempt to transmit their freshest packet to the receiver, for example a base station (BS), whenever they get a chance to transmit. Every time the BS successfully receives a new status from source node $i$, it saves the current status and discards all previous packets received from that node. As such, the BS keeps only one packet from each source node, namely the freshest one. We use $\Gen_i[t]$ to denote the generation time of the packet stored at the BS from source $i$ at time $t$. We define the \textit{age} $\Age_i[t]$ of source node $i$ at time $t$ as the time that has elapsed since the generation of its last received packet\footnote{This metric is also referred to as \textit{Age-of-Information (AoI)} and \textit{Time-Since-Last-Service (TSLS)} in different contexts. In the rest of the paper, we will refer to it as AoI or simple as age, interchangeably.}: $\Age_i[t] \triangleq t-\Gen_i[t]$.  We use\footnote{We will consistently use bold symbols to represent vectors.}  $\bm{\Age}[t] \triangleq (\Age_1[t], \cdots,\Age_n[t])$ to denote the ages of all sources at time slot $t.$

At the beginning of each time slot, the centralized scheduler decides which channels each of the source nodes will use to transmit to the base station based on the ages $\bm{\Age}[t]$ of all source nodes. Let $\dU_i(\bm{\Age}[t])$ be the number of channels source node $i$ uses to transmit at time $t$. Each transmission attempt can resolve in success or failure which we will describe below as part of the channel success model. If the base station successfully receives the packet from source $i$ at time $t$, then its age at time $t+1$ will reset to $1,$ otherwise its age will increase by one, i.e.,
\begin{equation*}
    \Age_i[t+1]= \begin{cases}1, & \hspace{-0.5in}\text {if transmission of source $i$ succeeds} \\ \Age_i[t]+1, & \text {otherwise.}\end{cases}
\end{equation*}
We allow each source $i$ to have a desired age threshold/deadline $\Tr_i$. The information of source $i$ is up-to-date if its age is less than or equal to this threshold $\Tr_i$. Otherwise, we speak of an age violation in that slot. In particular, we define the \textit{age-violation-rate} of source $i$ as the long-term average fraction of time slots when the source's age $\Age_i[t]$ exceeds its threshold $\Tr_i$, i.e., $\dsp \lim_{T\rightarrow\infty} \frac{1}{T} \dsp\sum_{t=1}^{T} \mathbb{1}\left\{\Age_i[t]> \Tr_i\right\}$. 
We use $\Tol_i\in[0,1]$ to indicate the tolerance of source $i$ that measures the maximum allowed age-violation-rate for its updates. ($\Tol_i=1$ indicates that there is no violation rate constraint, and $\Tol_i=0$ indicates that we do not allow any deadline violation.) When the age violation rate is no greater than the tolerance rate, the age violation tolerance constraint is satisfied.

\noindent \textbf{Channel success model and energy constraints:} 
The $n$ source nodes share $L$ wireless on/off fading channels, each of which can accommodate at most one packet transmission. However, even when there is a single transmission over a channel, a successful transmission is not guaranteed. 
In particular, source node $i$ has a channel success probability of $\cm_i$ when transmitting over each of its assigned channels\footnote{All our development can be generalized to the case when the success probability between source $i$ and channel $j$ is allowed to be different as $\cm_{ij}.$ However, this is omitted here as it increases the complexity of the exposition without adding to the substance.}. 

We call the update of source $i$ in a slot to be a success if any one of its transmissions over its assigned channels is successful. Since the channel is a collision channel, for an optimal scheduler we always have $\dsp\sum_{i=1}^n\dU_i(\bm{\Age}[t])\leq L.$ Once the value of $\dU_i(\bm{A}[t])$ is decided for all $i$, the scheduler will assign different channels to different sources, so that no two sources transmit over the same channel. Also, note that under the described channel success model, the probability for the BS to successfully receive an update from source node $i$ when the node uses $l$ channels is $1-(1-\cm_i)^l.$   

We assume that each transmission over a channel comes with an energy cost of $1$ unit\footnote{This can also be generalized to non-uniform energy costs over different channels, but omitted to avoid cumbersome notation.}. We require that the aggregate time-average energy cost for source $i$ is not greater than a given constraint $b_i$ channels per slot, i.e., we require $$\dsp \lim_{T \rightarrow \infty} \frac{1}{T} \dsp\sum_{t=1}^{T} \dU_i\left(\bm{\Age}[t]\right) \leq b_i, \quad b_i\in\mathbb{R^+}.$$ 

It is obvious that transmitting over more channels will increase the success probability of a source, but increase  energy consumption. We are interested in finding the number of channels that when allocated to sources optimize the desired age performance given the current age state, as well as energy and and tolerance constraints discussed above. 
In the next section, we attack this problem within the constrained Markov Decision Process (MDP) framework first for a single user, and then extend our approach to cover the multi-user setting. 

\section{Age-Optimal Multi-Channel Scheduling for a Single User}
\label{sec:single}

In this section, we first consider the single-user age-optimal multi-channel scheduling problem. This not only allows us to simplify the notation by omitting the subscripts, but also is of particular interest for the next generation ultra-wideband wireless communication technologies that are expected to support low-delay access over multiple fading channels.  We formulate a general age-optimal optimization problem which can be used in different scenarios in Section~\ref{single:form} and following the analysis of the performance in Section~\ref{single:ana}. To that end, in Section~\ref{single:insight}, we study the characterization and insights of the optimal schedulers for two important special cases of minimizing the \textit{average-age} and the \textit{age-violation-rate}, {\color{blue}which will be useful in guiding scheduler designers.}

\subsection{Problem formulation}
\label{single:form}
The problem of minimizing time-averaged age-based objectives under average energy and tolerance constraints can be generally formulated as the following constrained Markov decision problem \cite{altman1999constrained}: 
\begin{eqnarray}
&\min\limits_{\dU(\Age)} & \dsp \lim _{T \rightarrow \infty} \frac{1}{T} \sum_{t=1}^{T} \mathbb{E}\left[\omega_0(\Age[t])\right] \label{eqn:opt:single}\\
& \text { s.t }: & \dsp \lim _{T \rightarrow \infty} \frac{1}{T} \sum_{t=1}^{T} \mathbb{E}\left[\dU\left(\Age[t]\right)\right] \leq b, \label{eqn:energy}\\
& & \dsp \lim _{T \rightarrow \infty} \frac{1}{T} \sum_{t=1}^{T} \mathbb{E}\left[\omega_k\left(\Age[t]\right)\right] \leq c_k, \> k=1,\cdots,K, \nonumber \\
& & \dU(\Age[t]) \in \{0,1,\cdots,L\}.  \nonumber
\end{eqnarray}

The optimization is performed over Markovian policies described by a function $u(\cdot)$ that maps age levels to number of channels. It is known that such Markovian policies are sufficient for optimal operation\cite{altman1999constrained}. 

The first constraint on the time-averaged $u(\cdot)$ captures the average energy constraint discussed in the system model. The functions  $\omega_k(\cdot)$ serve as general functions that map the current state $A[t]$ to a value that measures the cost of that age with respect to various measures\footnote{We note that the problem can also solved with the same approach (but heavier notation) by more generally defining $\omega_k(\Age[t],\dU(\Age[t]))$ to be functions of both the age and the action.} By setting different mappings for the weight function $\omega_0(A[t])$, the objective can be changed into different commonly used age-related objectives: letting $\omega_0(a)=-\mathbb{1}\{a=1\}$ transform the objective to maximizing the average throughput; letting $\omega_0(a)=a$ makes the objective  minimize the average AoI; letting $\omega_0(a)=\mathbb{1}\{a\geq d\}$ make the objective minimize the average age-violation rate. {\color{blue} Note that this allows the objective function to be a non-convex/concave function.}
\subsection{Performance analysis}
\label{single:ana}
Next, we will analyze the generic constrained optimization problem under energy constraint by showing that the problem is equivalent to a Linear Programming (LP) problem and thus describe the optimal policy.

\begin{theorem}\label{thm:single_user}
The solution of the generic age-optimization problem \eqref{eqn:opt:single} can be obtained by solving the following linear programming problem:
\begin{equation*}
\begin{array}{ll}
\min\limits_{\yal} & \dsp \sum_{a=1}^{D}\sum_{l=0}^{L} y_{a}^{l}\omega_0(a) \\
\text {s.t:} & \dsp \sum_{a=1}^{D} \sum_{l=0}^{L} \yal \cdot l \leq b, \\
& \dsp\sum_{a=1}^{D}\sum_{l=0}^{L} y_{a}^{l}\omega_k(a)\leq c_k, \> k=1,\cdots,K,\\
& 0\leq \yal \leq 1 \quad\forall 1\leq a\leq D,0\leq l\leq L, \\
& \dsp\sum_{a=1}^{D} \sum_{l=0}^{L} \yal=1,\\ 
& \mathbf{Q} \bm{y}=\bm{0} ,\\
\end{array}
\end{equation*}
where $\bm{y}$ is a column vector of size $D L$ with $\bm{y}=(y_1^1,\cdots,y_1^L,\cdots,y_D^1,\cdots,y_D^L)^T$ as its components; $D$ is an upper bound on the age state in the system which can be set sufficiently large so that the probability of reaching $D$ is vanishing.\footnote{In practice, moderate level of $D$ is enough so that the dimension of LP won't be large. Also, when there is only age violation related objective and constraints, it's enough to set $D=d+1$. See~\ref{single:insight} and~\ref{sec:U2opt} for references.} $\mathbf{Q}\bm{y}=\bm{0}$ is the matrix representation of the following (global balance) equations:
\begin{equation*} 
\begin{array}{ll}
     &\sum\limits_{l=0}^{L} y_{a+1}^{l}-\sum\limits_{l=0}^{L} \yal (1-\cm)^{l}=0  \quad \forall a=1,\cdots,D-2,\\ 
     &\sum\limits_{l=0}^{L}\left(1-(1-\cm)^{l}\right) y_{D}^{l}-\sum\limits_{l=0}^{L} y_{D-1}^{l}(1-\cm)^{l}=0,\\
     &-\sum\limits_{l=0}^{L} y_{1}^{l}(1-\cm)^{l}+\sum\limits_{a=2}^{D} \sum\limits_{l=0}^{L} \yal \left(1-(1-\cm)^{l}\right)=0.
\end{array}
\end{equation*}
If this LP is feasible, and $\bm{y}$ is an optimal solution, then the optimal policy $\dU^*(a)$ is a probabilistic policy, whereby the probability $\fal$ of choosing $l$ channels when the age is at state $a$ equals:
\begin{equation}
    \fal= \begin{cases}\frac{\yal}{\dsp\sum_{l=0}^L\yal}, & \text {if} \dsp\sum_{l=0}^L\yal \neq 0 \\ \frac{1}{L}, & \text {if} \dsp\sum_{l}\yal =0\end{cases}
    \label{eqn:ytof}
\end{equation}
for $l=0,1,\cdots,L$ and $a=1,2,\cdots,D.$
\end{theorem}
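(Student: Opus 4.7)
The plan is to follow the classical occupation-measure reduction for constrained average-cost MDPs \cite{altman1999constrained}, recasting the CMDP in \eqref{eqn:opt:single} as a linear program whose variables are the long-run joint state-action visitation frequencies. First I would fix any stationary randomized Markov policy $\{\fal\}$ and analyze the induced age chain on the truncated state space $\{1,\ldots,D\}$. Under the channel success model, from any age $a<D$ the chain moves to $1$ with probability $\sum_l \fal(1-(1-\cm)^l)$ and to $a+1$ with probability $\sum_l \fal(1-\cm)^l$, while at $a=D$ a failure produces a self-loop. I would treat $D$ as a truncation device and argue that it is harmless: the stationary mass on $\{D\}$ decays geometrically in $D$ under any policy that transmits with positive probability, so for $D$ large enough the truncated problem inherits the same value as the original (and, for purely deadline-based costs, one may take $D=\Tr+1$ exactly, as noted in the theorem).

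The core step is to introduce the occupation measure $\yal \triangleq \pi_a\,\fal$, where $\pi_a$ is the stationary probability of age $a$ induced by the policy. By construction $\yal\ge 0$ and $\sum_{a,l}\yal=1$, which matches two of the LP constraints. Rewriting the global balance equations $\pi_{a'}=\sum_{a,l}\pi_a\fal\,P(a\to a'\mid l)$ in terms of $\yal$ produces precisely the three families encoded by $\mathbf{Q}\bm{y}=\bm{0}$: the shift identities at intermediate ages $a\in\{2,\ldots,D-1\}$, the boundary identity at $a=D$ which absorbs the failure self-loop, and the reset identity at $a=1$ which aggregates successful inflows from every age. By the ergodic theorem for finite Markov chains, the long-run averages in \eqref{eqn:opt:single} and \eqref{eqn:energy} then collapse to $\sum_{a,l}\yal\,\omega_k(a)$ and $\sum_{a,l}\yal\cdot l$ respectively, so every feasible stationary policy of the CMDP yields an LP-feasible $\bm{y}$ with the same objective value.

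For the reverse direction, I would start from any LP-feasible $\bm{y}$ and reconstruct a policy through Bayes' rule as in \eqref{eqn:ytof}: set $\fal=\yal/\sum_l\yal$ whenever $\sum_l\yal>0$, and set $\fal$ to an arbitrary distribution (for instance uniform) otherwise. The balance equations together with the normalization guarantee that $\pi_a\triangleq\sum_l\yal$ is a genuine stationary distribution for the chain induced by $\{\fal\}$ with $\yal=\pi_a\,\fal$; on states where $\pi_a=0$ the policy cannot affect any time average, so the freedom in \eqref{eqn:ytof} is harmless. This proves that the LP optimum equals the CMDP optimum and that \eqref{eqn:ytof} outputs an optimal stationary Markovian policy.

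The step I expect to require the most care is the two-way correspondence between occupation measures and stationary policies. Concretely, one must check (i) that truncation at $D$ does not perturb the optimum, which follows from the geometric tail bound noted above; and (ii) that restricting attention to stationary Markov policies is without loss of optimality among all history-dependent policies, which follows from the unichain structure of the induced age chain together with the standard CMDP theory in \cite{altman1999constrained}. Beyond these structural issues, the remaining work—verifying that the transcription of the balance equations yields exactly $\mathbf{Q}\bm{y}=\bm{0}$ and that the cost and constraint functionals are linear in $\bm{y}$—is a mechanical calculation that I would not grind through in detail.
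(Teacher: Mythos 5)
Your proposal is correct and follows essentially the same route as the paper: reduce the CMDP to an occupation-measure LP by setting $\yal=\spi(a)\fal$, transcribe the global balance and normalization conditions into the constraints $\mathbf{Q}\bm{y}=\bm{0}$ and $\sum_{a,l}\yal=1$, and recover the policy by the conditional-probability formula \eqref{eqn:ytof} with arbitrary actions on zero-mass (transient) states. Your treatment is in fact slightly more explicit than the paper's on the two delicate points (the harmlessness of the truncation at $D$ and the two-way correspondence between occupation measures and stationary policies), both of which the paper handles by citation to \cite{altman1999constrained} and a footnote.
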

\begin{proof}
As shown in \cite{altman1999constrained}, it is enough for us to optimize over the Markovian policies for Problem~\ref{eqn:opt:single}. Since the process is not affected by a shift in time, we can define the probabilistic scheduling policy where $\fal$ denotes the probability of choosing $l$ channels when the AoI of single source is at state $a$. The normalization constraint of the probabilistic scheduling policy requires $\dsp\sum_{l=0}^{L} \fal=1$ and $\fal \geqslant 0$ for all $a$. 

Notice that the system state can be fully characterized by a one-dimensional Markov chain with age $A[t]$ as state. Given the current state information $A[t]$, the system state at the next time slot $A[t+1]$ depends only on the current state $A[t]$ (with no dependence on earlier states) and the current action $\dU[t].$ In addition, the objective and constraints only depend on the current state and action. So an equivalent MDP problem can be formulated. Let $\lambda_{a_1}^{a_2}$ denote the transition probability from state $a_1$ to $a_2$, and define $\bar{\mu}\triangleq1-\mu$ as the probability of channel failure. Then based on the channel success model,
\begin{equation}
\label{eqn:single:tran}
    \lambda_{a_1}^{a_2}=\begin{cases} \dsp\sum_{l=1}^{L}f_{a_1}^l\bar{\mu}^l, & 1\leq a_1\leq D-1, a_2=a_1+1\\ 
    \dsp\sum_{l=1}^{L}f_a^l(1-\bar{\mu}^l), & a_1=1,\cdots,D, a_2=1\\
    \dsp\sum_{l=1}^{L}f_D^l(1-\bar{\mu}^l), & a_1=D, a_2=D\\
    0, & \text {otherwise.}
    \end{cases}
\end{equation}
Since there are finitely many states, there exists a stationary distribution $\spi(a)$ for every $a.$ Let $\mathscr{C}$ be the set of all recurrent states, then $\mathscr{C}$ is irreducible and closed, thus $\mathscr{C}$ is positive recurrent. When $a\in \mathscr{C}$ the stationary distribution $\spi(a)$ is equal to the long term average $\dsp \lim _{T \rightarrow \infty} \frac{1}{T} \sum_{t=1}^{T} \mathbb{1}\{A[t]=a\}$ independent of the starting point. When state $a\notin \mathscr{C}$, then both the stationary distribution and the long term average are equal to zero.
So the optimization problem is equivalent to the following constraint MDP problem: 

\begin{eqnarray}
\min\limits_{\fal} & \dsp\sum_{a=1}^{D} \spi(a)\omega_0(a) \nonumber\\
\text {s.t:} & \dsp\sum_{a=1}^{D} \sum_{l=0}^{L} \spi(a)\fal l \leq b \nonumber\\
&\dsp\sum_{a=1}^{D} \spi(a)\omega_k(a)\leq c_k,\> k=1,\cdots,K \label{eqn:sin:tol}\\
& \dsp\sum_{l=0}^{L} \fal=1, \fal \geqslant 0 \quad\forall a\leq D,l\leq L \\ \label{eqn:sin:f}
& H \cdot \Pi=\Pi, \quad \bm{1} \cdot \Pi=1 \label{eqn:sin:norm}
\end{eqnarray}
where $\Pi=[\spi(1),\cdots,\spi(D)]^T$ is the stationary distribution of the Markov Chain and $H$ is the $D \times D$ transition matrix with $h_{ij}=\lambda_{j}^{i}$.
Let us define $\yal=\spi(a)\fal,$ then $\spi(a)=\dsp\sum_{l=0}^L \yal$ for $a\leq D$. Then the constraint~\ref{eqn:sin:tol} becomes:
$$\dsp\sum_{a=1}^{D}\dsp\sum_{l=0}^L \yal\omega_k(a)\leq c_k, \> k=1,\cdots,K.$$
The normalization constraint in Equation~\ref{eqn:sin:norm} requires $\dsp\sum_{a=1}^{D} \sum_{l=0}^{L} \yal=1$. Substituting $\yal$ into the CMDP problem and after simplifying, we establish the equivalency of the Linear Programming problem. 
After obtaining the solution $\bm{y}$, we let $\fal=\yal/\spi(a)$ for $\spi(a)\neq 0$.States $a$ with $\spi(a)=0,$ are transient states, and the actions at these states do not affect the average results. For those states we adopt a simple policy as in Equation~\ref{eqn:ytof}, then the constraint~\ref{eqn:sin:f} is also satisfied.
\end{proof}

\subsection{Characterization and Insights on Age-Optimal Schedulers}
\label{single:insight}
Our general framework encompasses a wide range of objectives and constraints for different choices of $\omega_k(\cdot)$ functions using different age and age-violation metrics. 
In this section, we focus on two important problems that can be expressed within our framework: \emph{average age minimization} and \emph{age-violation-rate minimization}. 
This effort will enable us to characterize their optimal schedulers and gain insights into their nature.  

\noindent \textbf{Optimal scheduler minimizing average age:}
When we set $\omega_0(a)=a$ in \eqref{eqn:opt:single}, the objective of the optimization problem becomes to minimize the average age 
$$\dsp \lim _{T \rightarrow \infty} \frac{1}{T} \sum_{t=1}^{T} \mathbb{E}\{\Age[t]\} = \sum_{a=1}^{D} a \>\spi(a).$$ 
For this problem formulation, we retain the energy constraint $\dsp \lim _{T \rightarrow \infty} \frac{1}{T} \sum_{t=1}^{T} \mathbb{E}\left[\dU\left(\Age[t]\right)\right] \leq b$; but do not need additional age constraints. Hence, $\omega_k(a)=0$ and $c_k=0$, for all $k$ and $a$. 

Figure~\ref{fig:minAoIU1} depicts the average number of activated channels of the average-age optimal scheduler as a function of the age states under different channel success probabilities $\cm$ for the energy constraint $b=2$. We will further discuss these results at the end of this section in comparison with the next scheduler of interest.  

\begin{figure}[t!]
\begin{center}
\includegraphics[height=5cm]{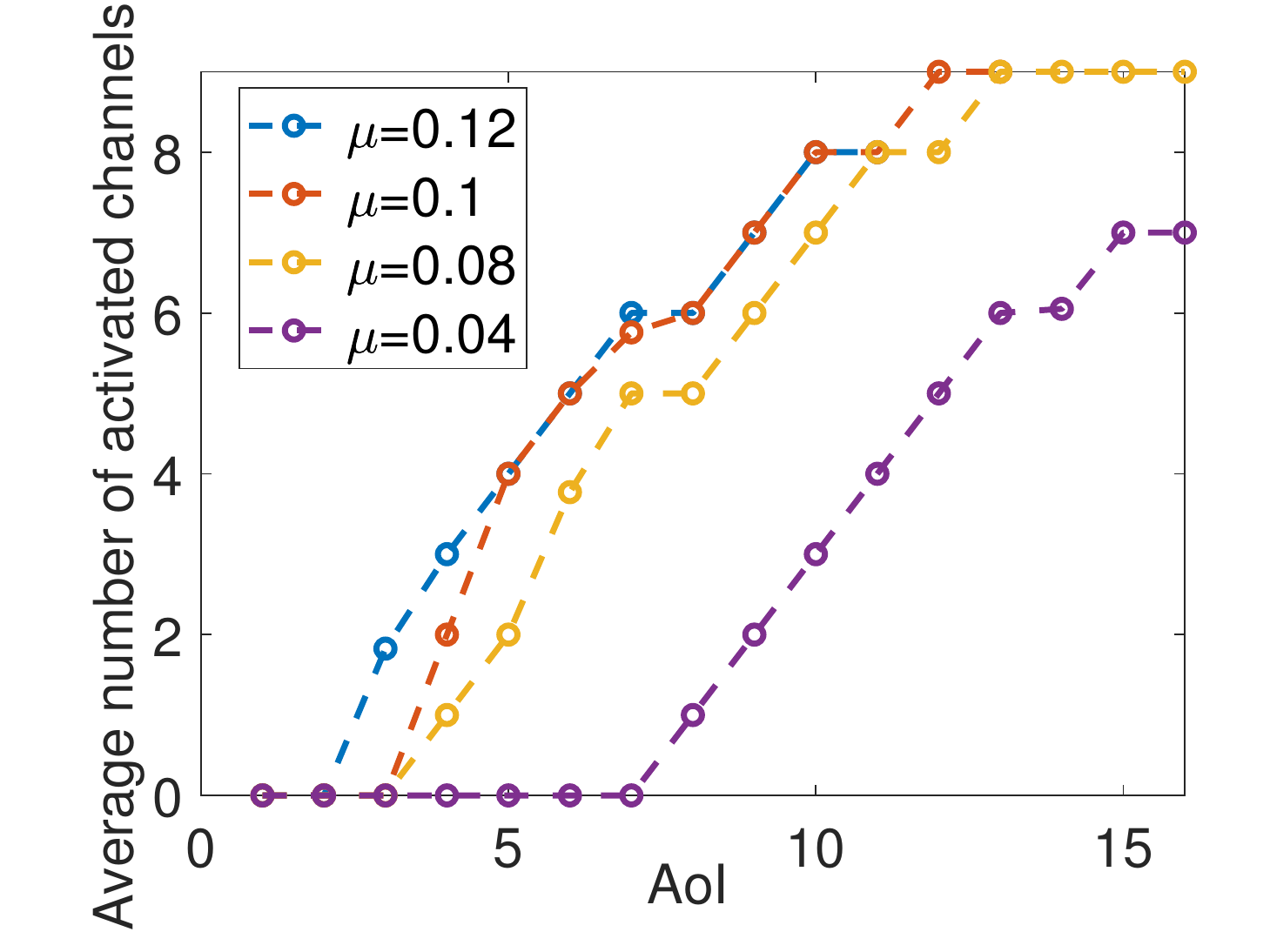}
\caption{Optimal number of channels to choose to minimize average AoI when $b=2$.}
\label{fig:minAoIU1} 
\end{center}
\end{figure}

\noindent \textbf{Optimal scheduler minimizing age-violation-rate:}
Setting $\omega_0(a)=\mathbb{1}\{a> \Tr\}$ n \eqref{eqn:opt:single}, the objective becomes minimizing the average age-violation-rate 
$$\dsp \lim _{T \rightarrow \infty} \frac{1}{T} \sum_{t=1}^{T} \mathbb{E}\{\mathbb{1}\{\Age[t]> \Tr\}\} = \dsp\sum_{a=\Tr+1}^{D} \spi(a).$$ 
As before, we keep the energy constraint, but do not need additional age constraints. Hence,  $\omega_k(a)=0$ and $c_k=0$, for all $k$ and $a$.  

With this, the problem becomes minimizing the age-violation-rate under an energy constraint. Unlike in the previous problem, our goal is not to minimize the average age but to avoid age-violation events. In this scenario, we can view all the states with $a>\Tr$ as state $\Tr+1,$ so it's enough to set $D=\Tr+1.$

Figure~\ref{fig:softAoIU1} depicts the average number of activated channels of the violation-rate optimal scheduler as a function of the age states under different channel success probabilities $\cm$ for age threshold $\Tr=8$ and the same energy constraint $b=2$. Next, we compare the optimal policies of these two schedulers and discuss the insights that can be gained from their study.

\begin{figure}[t!]
\begin{center}
\includegraphics[height=5cm]{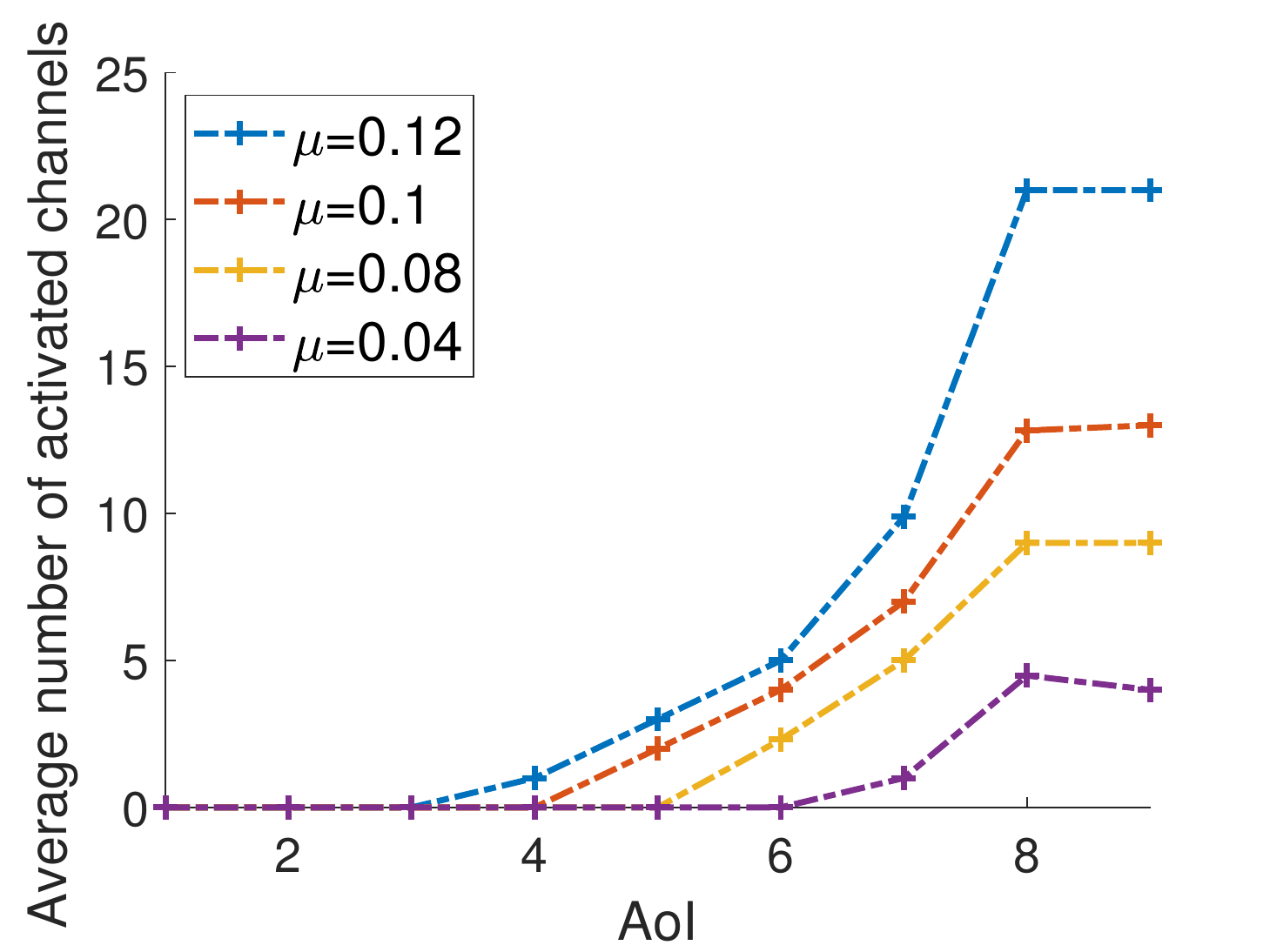}
\caption{Optimal number of channels to choose to minimize AoI violation rate when $b=2$ and $\Tr=8$.}
\label{fig:softAoIU1}
\end{center}
\end{figure}

\noindent \textbf{Insights on the two optimal schedulers:} We start by noting the similarities of the optimal policy under both scenarios:
\begin{enumerate}[label=(\roman*)]
\item Each optimal policy is a probabilistic combination of at most two deterministic policies, {\color{blue}which matches the result that the number of randomization is at most the number of constraints, as shown in \cite{altman1999constrained}.}
\item For each scenario, as the channel success probability increases, the corresponding optimal policy starts transmitting at lower age levels, and also tends to choose more channels at the same age level. This is a somewhat counter-intuitive characteristic that indicates that the optimal policy should be more active and active earlier when the channels are more reliable. 
\item {\color{blue}The optimal policy in each scenario is idle when AoI is relatively small.} This is meaningful once we observe that, when the age is relatively small, a successful transmission will not benefit the objective as much as when the age is large. Hence, the optimal scheduler saves energy for larger age states. 
\end{enumerate}

However, we also notice differences between the two sets of schedulers:
\begin{enumerate}[label=(\roman*)]
\item {\color{blue}The optimal policy in the average age minimization problem has an activation function $u^*(\cdot)$ that is monotone non-decreasing with increasing age state. On the other hand, the monotonicity does not hold in the age violation rate minimization problem. This difference comes from the non-convex nature of the \ati{the age violation rate} function in the latter case. In \cite{altman1999constrained} and many related works (e.g., \cite{tang2019scheduling,tang2020minimizing}), the authors exploit the monotone structure and threshold \ati{nature} of the optimal scheduling policy \ati{for} solving the CMDP, revealing insights as well as simplifying the algorithm by using the convexity or concavity of the objective functions. However, \ati{in our general treatment, the objective functions, such as} age violation rate, \ati{are not necessarily} convex or concave, which prevents us from using the same approach. Hence, to obtain the optimal policy, we use the generally applicable LP method despite the higher computational complexity that it may require \ati{in order to develop insights about the optimal solution}.}
\item In the average age minimization problem, the number of activated channels of the optimal policy experiences a sub-linear/concave like increase with respect to ages after the age level that the number of activated channels starts to be above zero. In contrast, the age violation rate minimizing schedulers experience a super-linear/convex like increasing with respect to age until the deadline level $\Tr.$ 
This difference can be interpreted as follows: in the age violation rate minimization problem, the penalty happens only when the age is beyond the age deadline, and hence the optimal scheduler will be more aggressive as the threshold level is approached from below. In contrast, for the average age minimization problem, the number of activated channels increases more gradually to balance the tradeoff between consuming energy unnecessarily at very low age levels and waiting too long to consume the available energy, which yields an indefinitely increasing cost. 
\end{enumerate}

These insights on the structure of the allocation functions of the optimal schedulers can guide designers in restricting their search to classes of functions with sufficiently flexible but also tractable forms whenever the solution through the LP strategy is not possible due to lack of prior statistical information as well as computational resources.

\begin{figure}
\begin{center}
\includegraphics[height=5cm]{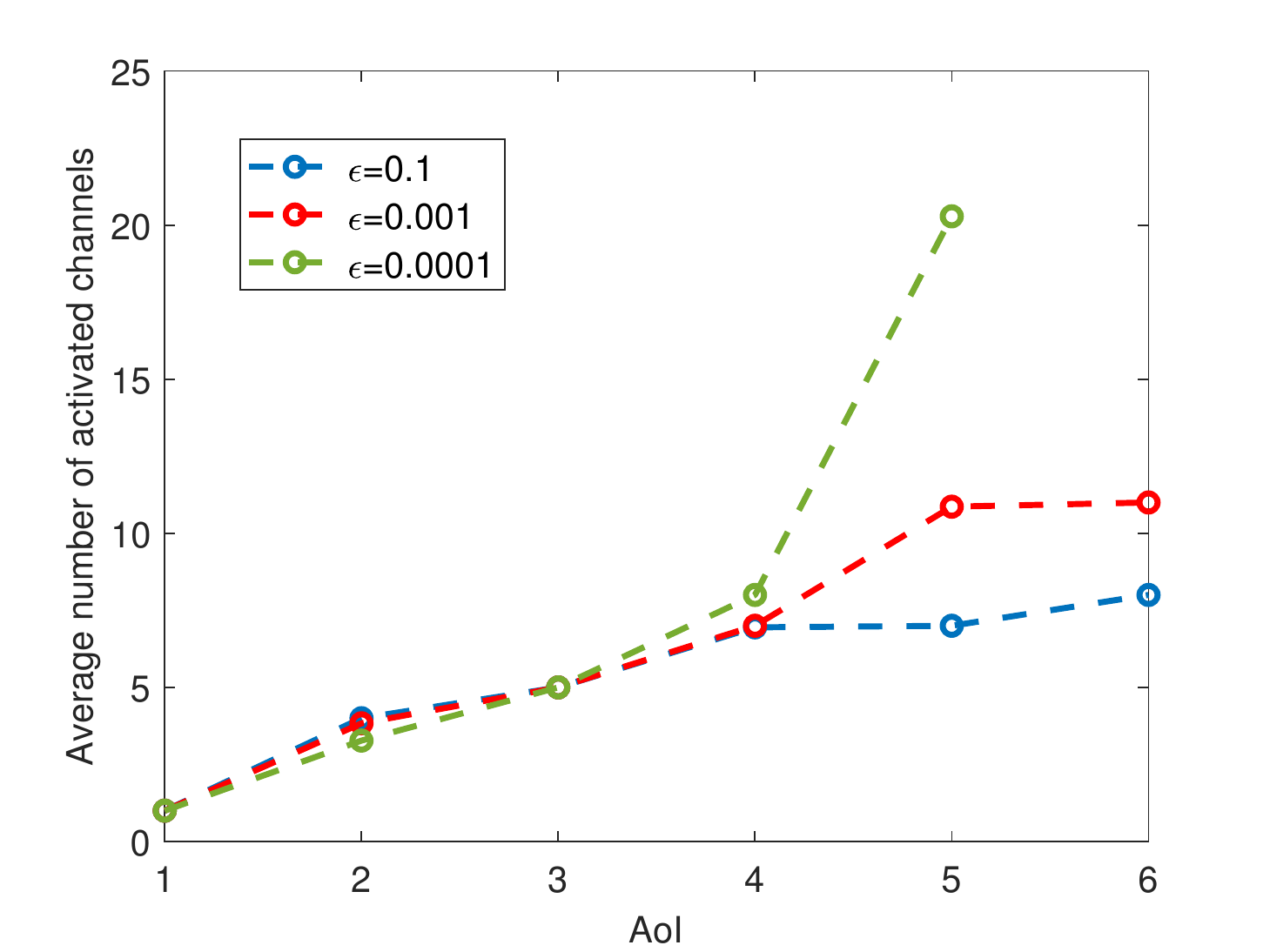}
\caption{Optimal number of channels to choose to minimize average age under violation rate constraint when $\Tr=5,b=3, \mu=0.2$}
\label{fig:transition}
\end{center}
\end{figure}

To demonstrate how the age violation rate constraint effects the shape of the scheduler more clearly, in Figure~\ref{fig:transition} we set the objective function to be $\omega_0(a)=a$, the energy constraint to be $b=3$, and the channel success probability to be $\cm=0.2$. 
In addition, we set $\omega_1(a)=\mathbb{1}\{a> \Tr\}$, where the age deadline $\Tr=5.$ We set $c_1=\Tol$ and show how the number of activated channels changes over age states under different $\Tol$ levels. By adding and tightening the tolerance constraint, we can see the transition from concave (or sublinear) to convex (or superlinear) form. As such, the optimal scheduler becomes more aggressive when the age increases. This reveals a trade-off between the average age and the age-violation-rate, namely that reducing the age violation rate calls for an increasingly more aggressive allocation function.

\section{Age-Optimal Multi-Channel Scheduling for Multiple Users}
\label{sec:multi}
In this section, we extend our framework to the general \emph{multi-user} multi-channel age-optimal scheduling problem. As before, this formulation allows us to cover a range of scenarios depending on the choice for objective function and constraints. To that end, we \ati{investigate} the feasibility and stability region of the optimal policy \ati{along with alternatives from related literature} associated with multi-user settings.

\subsection{Problem Formulation}
The formulation of the optimization problem for the multi-user case is similar to single user case \eqref{eqn:opt:single}:
\begin{eqnarray}
& \min\limits_{\bm{\dU}(\bm{\Age})} & \dsp \lim _{T \rightarrow \infty} \frac{1}{T} \sum_{t=1}^{T} \mathbb{E}\left[\omega_0(\bm{\Age}[t])\right] \label{eqn:opt:multi}\\
& \text { s.t }: & \dsp \lim _{T \rightarrow \infty} \frac{1}{T} \sum_{t=1}^{T} \mathbb{E}\left[\dU_i\left(\bm{\Age}[t]\right)\right] \leq b_i, \> i=1,\cdots,n,\nonumber \\
& & \dsp \lim _{T \rightarrow \infty} \frac{1}{T} \sum_{t=1}^{T} \mathbb{E}\left[\omega_k\left(\bm{\Age}[t]\right)\right] \leq c_k, \> k= 1,\cdots,K,\nonumber \\
& & \dU_i(\bm{\Age}[t])\in \{0,1,\cdots,L\}, \> i=1,\cdots,n,\nonumber \\
& & \dsp\sum_{i=1}^n \dU_i(\bm{\Age}[t])\leq L \nonumber 
\end{eqnarray}
where $=(\dU_1(\bm{A}),\cdots,\dU_n(\bm{A}))$ denotes the scheduling policy at state $\bm{A}$ with $\dU_i(\bm{A})$ as the number of channels allocated to source $i$. The weight functions $\omega_k(\cdot), k=0,1,\cdots,K,$ map the age states to cost values that capture age-related objectives and constraints. Source nodes can have heterogeneous energy constraints $b_i$, which means node $i$ can transmit over at most $b_i$ channels per slot on average.

\subsection{Performance analysis}
Next, we establish the equivalence of the multi-user problem formulation to a linear programming (LP) problem, as we did for the single user case in Section~\ref{single:ana}. To enable a more compact notation, we will use $\bm{a}\triangleq (a_1,a_2,\cdots,a_n)$ and $\bm{l}\triangleq (l_1,l_2,\cdots,l_n)$ to denote values of $\bm{\Age}[t]$ and $\bm{\dU}(\bm{A})$, respectively. 
We further define sets $\mathscr{A}\triangleq\{1,\cdots,D\}^n$, $\mathscr{L}\triangleq\{1,\cdots,L\}^n$, and $\mathscr{L}_1\triangleq\{\bm{l}:l_\Sigma\leq L\}$ where $l_\Sigma\triangleq\dsp\sum_{i=1}^n l_i$.

\begin{theorem}
The solution of the multi-user age-optimization problem \eqref{eqn:opt:multi} can be obtained by solving the following linear programming problem:
\begin{align}
\min\limits_{y_{\bm{a}}^{\bm{l}}} & \sum\limits_{\bm{a}\in\mathscr{A}}\sum\limits_{\bm{l}\in\mathscr{L}_1}y_{\bm{a}}^{\bm{l}}\omega_0(\bm{a}) \nonumber\\
\text { s.t: } & \dsp\sum_{\bm{a}\in\mathscr{A}} \sum_{\bm{l}\in\mathscr{L}_1} y_{\bm{a}}^{\bm{l}} l_i \leq b_i, i=1,2,\cdots,n\nonumber\\
& 0\leq y_{\bm{a}}^{\bm{l}} \leq 1 \quad \forall \bm{l}\in\mathscr{L},\bm{a}\in\mathscr{A}\nonumber\\
& y_{\bm{a}}^{\bm{l}}=0 \quad \forall \bm{l}\in\mathscr{L}/\mathscr{L}_1 \nonumber\\
& \dsp\sum_{\bm{a}\in\mathscr{A}}\sum_{\bm{l}\in\mathscr{L}_1}y_{\bm{a}}^{\bm{l}}=1\nonumber\\
&\dsp\sum_{\bm{a}\in\mathscr{A}} \sum_{\bm{l}\in\mathscr{L}_1}y_{\bm{a}}^{\bm{l}}\omega_k(\bm{a}) \leq c_k, k=1,\cdots,K\label{eqn:multiOmegak}\\
& \mathbf{Q} \bm{y}=\bm{0} \nonumber
\end{align}
where $\bm{y}$ is a column vector with $y_{\bm{a}}^{\bm{l}}$ as components and $\mathbf{Q}$ represents the transition matrix associated with the age dynamics, exactly in the same form as in the single-user case (cf. Theorem~\ref{thm:single_user}). 


If this LP is feasible and $\bm{y}$ is an optimal solution, then the optimal policy $\dU_i^*(\bm{a})$ is a probabilistic policy, whereby the probability $f_{\bm{a}}^{\bm{l}}$ of choosing $\bm{l}$ channels for source nodes $i=1,\cdots,n$ when the AoI is at state $\bm{a}$ equals:
\begin{equation*}
    f_{\bm{a}}^{\bm{l}}= \begin{cases}\frac{y_{\bm{a}}^{\bm{l}}}{\dsp\sum_{\bm{l}\in\mathscr{L}}y_{\bm{a}}^{\bm{l}}}, & \text {if} \dsp\sum_{\bm{l}\in\mathscr{L}}y_{\bm{a}}^{\bm{l}} \neq 0 \\ \frac{1}{|\mathscr{L}|}, & \text {if} \dsp\sum_{\bm{l}\in\mathscr{L}}y_{\bm{a}}^{\bm{l}} =0\end{cases}
\end{equation*}
for $\bm{l}\in\mathscr{L}$ and $\bm{a}\in \mathscr{A}.$ 
\end{theorem}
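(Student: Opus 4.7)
The plan is to mirror the single-user argument in Theorem~\ref{thm:single_user} step-by-step, with the state and action now taking vector values. First I would invoke the result of Altman to restrict attention to stationary Markovian policies, described by probabilities $f_{\bm{a}}^{\bm{l}}$ of allocating the channel vector $\bm{l}$ to the sources when the age vector is $\bm{a}$. The collision constraint $\sum_i l_i \leq L$ forces $f_{\bm{a}}^{\bm{l}} = 0$ for $\bm{l} \in \mathscr{L}\setminus\mathscr{L}_1$, which will translate directly into the LP constraint $y_{\bm{a}}^{\bm{l}} = 0$ on the same set.

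Next I would set up the induced Markov chain on the finite state space $\mathscr{A}$. Conditioned on the action $\bm{l}$, the sources are assigned disjoint channels and their success events are independent, so the joint transition factors across sources:
\[
\lambda_{\bm{a}_1}^{\bm{a}_2}(\bm{l}) \;=\; \prod_{i=1}^n \lambda^{i}_{a_{1,i}\to a_{2,i}}(l_i),
\]
where each single-source factor $\lambda^{i}_{\cdot\to\cdot}(l_i)$ is given by Equation~\eqref{eqn:single:tran} with success probability $\cm_i$ and uses the cap at $D$. Since $\mathscr{A}$ is finite, a stationary distribution $\spi(\bm{a})$ exists, and on the recurrent class it equals the long-term time-average appearing in the objective and the constraints of \eqref{eqn:opt:multi}.

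The decisive step is the change of variables $y_{\bm{a}}^{\bm{l}} \triangleq \spi(\bm{a})\, f_{\bm{a}}^{\bm{l}}$, so that $\spi(\bm{a}) = \sum_{\bm{l}\in\mathscr{L}_1} y_{\bm{a}}^{\bm{l}}$. Under this substitution, each time-average $\lim_T \frac{1}{T}\sum_t \mathbb{E}[\omega_k(\bm{\Age}[t])]$ becomes $\sum_{\bm{a}}\sum_{\bm{l}} y_{\bm{a}}^{\bm{l}}\omega_k(\bm{a})$, yielding the LP objective and the constraints in~\eqref{eqn:multiOmegak}. The per-user energy constraint becomes $\sum_{\bm{a}}\sum_{\bm{l}} y_{\bm{a}}^{\bm{l}}\, l_i \leq b_i$; the probability normalization $\sum_{\bm{a}} \spi(\bm{a}) = 1$ becomes $\sum_{\bm{a}}\sum_{\bm{l}} y_{\bm{a}}^{\bm{l}} = 1$; and the global balance $H\Pi = \Pi$ becomes $\mathbf{Q}\bm{y} = \bm{0}$, where $\mathbf{Q}$ is assembled from the product-form transitions above. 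Inverting the change of variables on the support of $\spi$ recovers the stated formula for $f_{\bm{a}}^{\bm{l}}$; states with $\spi(\bm{a}) = 0$ are transient, so any policy (including the uniform tie-breaking rule) can be used there without affecting the objective or any constraint.

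The main obstacle I anticipate is bookkeeping rather than conceptual: writing $\mathbf{Q}$ explicitly requires enumerating the product-form transitions over all pairs $(\bm{a}_1, \bm{a}_2)$ and all admissible $\bm{l}\in\mathscr{L}_1$, with the per-source boundary case at age $D$ handled as in the single-user balance equations. A secondary point to verify carefully is that imposing $y_{\bm{a}}^{\bm{l}} = 0$ for $\bm{l}\notin\mathscr{L}_1$ in the LP is equivalent to forbidding colliding allocations in the MDP, so that every feasible LP solution corresponds to a genuine policy and vice versa; this follows because the balance equations only couple $y$-variables with $\bm{l}\in\mathscr{L}_1$ through transitions that are themselves product-form and therefore well-defined on $\mathscr{L}_1$. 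Aside from these verifications, the remainder of the argument is a direct transcription of the single-user proof to the vector-valued setting.
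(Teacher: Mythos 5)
Your proposal is correct and follows essentially the same route as the paper's proof: restricting to stationary Markovian policies via Altman's result, formulating the $n$-dimensional CMDP with stationary distribution $\spi(\bm{a})$, applying the occupation-measure change of variables $y_{\bm{a}}^{\bm{l}}=\spi(\bm{a})f_{\bm{a}}^{\bm{l}}$, and recovering the policy on recurrent states while assigning an arbitrary (uniform) rule on transient ones. The only difference is that you make explicit the product-form factorization of the joint transition probabilities across sources, which the paper leaves implicit by referring back to the single-user transition equations.
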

\begin{proof}
We will  use $f_{\bm{a}}^{\bm{l}}$ to denote the probability of choosing $\bm{l}=(l_1,\cdots,l_n)$ channels for source nodes $(1,\cdots,n$) when the AoI is at state $\bm{a}$. Thus $\sum\limits_{\bm{l}\in\mathscr{L}}f_{\bm{a}}^{\bm{l}}=1$, and $f_{\bm{a}}^{\bm{l}}\geq 0$ for all $\bm{a}$. 
Similarly as in Theorem~\ref{thm:single_user}, the constraint MDP problem with $n-$dimensional Markov Chains for multi-user scheduling can be generally formulated as:
\begin{eqnarray}
&\min & \dsp\sum_{\bm{a}}\spi(\bm{a})\omega_0(\bm{a}) \nonumber\\
&\text { s.t: } & \dsp\sum_{\bm{a}} \sum_{\bm{l}} \spi(\bm{a}) f_{\bm{a}}^{\bm{l}} l_i \leq b_i,i=1,2,\cdots,n \label{eqn:multi:energy}\\
& & f_{\bm{a}}^{\bm{l}}=0 \quad \forall \bm{l}\in\mathscr{L}/\mathscr{L}_1\nonumber\\
& &\dsp\sum_{\bm{a}}\spi(\bm{a})\omega_k(\bm{a})\leq c_k \quad k=1,\cdots,K\nonumber\\
& & H \cdot \Pi=\Pi,\quad \bm{1} \cdot \Pi=1, \label{eqn:multi:station}
\end{eqnarray}
where the indices range over  $\bm{a}\in\mathscr{A}$ and $\bm{l}\in\mathscr{L}$;  
$\spi(\bm{a})$ is the stationary distribution of state $\bm{a}$; and $\omega_k(\bm{a}), k=0,1,\cdots,K,$ are age related objective and cost functions. The constraints~\ref{eqn:multi:energy} bound the average energy of nodes $i$ by $b_i$ for $i=1,\cdots,n$. In the constraint~\ref{eqn:multi:station}, $\Pi$ is a $D^n\times 1$ stationary distribution vector with $\spi(\bm{a}), \bm{a}\in\mathscr{A}$ as entries.\footnote{The existence of the stationary distribution follows by the same proof as in Theorem~\ref{thm:single_user}.} $H$ represents the $D^n\times D^n$ transaction matrix with $h_{i,j}$ equals the probability of transaction from the $j^{th}$ state in $\Pi$ to the $i^{th}$ state in $\Pi,$ which can be detailed by using the age evolution and channel success probability equations similarly as in Equation~\ref{eqn:single:tran}.
Similarly, we will define $$y_{\bm{a}}^{\bm{l}}\triangleq y_{a_1,a_2,\cdots,a_n}^{l_1,l_2,\cdots,l_n}=\spi(\bm{a})f_{\bm{a}}^{\bm{l}}.$$
By changing the value of the weight functions, we can get different AoI related metrics, but all are linear with respect to $y_{\bm{a}}^{\bm{l}}$. Then,$$\spi(\bm{a})=\sum\limits_{\bm{l}}y_{\bm{a}}^{\bm{l}},$$ and the normalization constraint requires: $$\sum\limits_{\bm{a}} \sum\limits_{\bm{l}}y_{\bm{a}}^{\bm{l}}=1.$$
Substituting $y_{\bm{a}}^{\bm{l}}$ into the CMDP problem, we obtain the equivalence of the LP problem.
\end{proof}


\subsection{Characterization and insights on multi-user scheduling problem with violation tolerance Constraints}
\label{sec:U2opt}
Since there is no closed form solution to the general age-optimal problem, we will study the multi-user single-channel scheduling feasibility problem with age-violation tolerance constraint as a common setting to investigate its performance and characteristics. 

In particular, we will compare the stability region of the optimal scheduler with a previously developed algorithm that was developed for the special case of multi-user single-channel setting \cite{Li2021Soft}. To that end, we set $L=1$ and $b_i>1$. Thus, all the energy constraints will be inactive, and we can focus on the tolerance constraint, as in \cite{Li2021Soft}. Since we are only interested in feasibility, we set $\omega_0(\bm{a})=1$ for all $\bm{a}$. To express the age-violation rate constraints we define the weight functions 
$$\omega_k(\bm{a})=
\begin{cases}
0, & \text {if } a_k\leq \Tr_k \\
1, & \text {if } a_k\geq \Tr_k+1,
\end{cases}$$
and set $c_k=\Tol_k$ for $k=1,2,\cdots,K=n,$ to represent the heterogeneous age-violation tolerance level for the $k^{th}$ source. 

Then the constraint $\dsp\sum_{\bm{a}}\spi(\bm{a})\omega_k(\bm{a})\leq c_k$ becomes 
$$\spi_k(\Tr_k+1)\leq \Tol_k \quad \forall k=1,\cdots,K=n,$$
where $\spi_k(\Tr_k+1)$ denotes the total probability (under the stationary distribution) that source $k$ violates its age threshold $\Tr_k$. 
Since 
$$\spi_k(\Tr_k+1) = \sum\limits_{j_1,...,j_{k-1},j_{k+1},...,j_n} \spi(j_1,...j_{k-1},\Tr_k+1,j_{k+1}...,j_n),$$ the constraint~(\ref{eqn:multiOmegak}) in the linear programming problem becomes $$\sum\limits_{j_1,...,j_{k-1},j_{k+1},...,j_n}\sum\limits_{\bm{l}} y_{j_1,...,j_{k-1},\Tr_k+1,j_{k+1},...,j_n}^{\bm{l}} \leq \Tol_k.$$

For the sake of easy visualization, we study the case with $n=2$ users. In this case, the LP problem is formulated as:
\begin{equation*}
    \begin{array}{ll}
\min & 1 \\
\text { s.t: } & 0\leq y_{a_1,a_2}^{l_1,l_2}\leq 1 \quad \forall l_1,l_2=0,1\\
&  y_{a_1,a_2}^{l_1,l_2}=0 \quad \forall l_1+l_2>1 \\
& \sum\limits_{j}\sum\limits_{l_1,l_2} y_{\Tr_1+1,j}^{l_1,l_2}\leq \Tol_1\\
& \sum\limits_{j}\sum\limits_{l_1,l_2} y_{j,\Tr_2+1}^{l_1,l_2}\leq \Tol_2
\end{array}
\end{equation*}
The numerical results can be seen in Figures\ref{fig:U2Region_sym} and \ref{fig:U2Region_asm} for different parameters where the upper right area of the solid blue line is the stability region of the optimal scheduler. These typical examples reveal the non-negligible gap between the performance of the optimal scheduler and the previously proposed design, even for a small two user setting. 

This motivates the search for new algorithms that can perform closer to the optimal scheduler, even when the channel statistics are unknown a priori. This is performed in the next section along with further discussion about these numerical results after we discuss our online scheduling algorithm.

Before we proceed, we note even the above numerical results are for two-user single-channel scheduling problem under tolerance constraints for visualization purposes, our methods apply to the more general multi-user multi-channel scheduling problem under violation tolerance and energy constraints. {\color{blue} Although the computational complexity may be relatively high for the LP solution compared to other solutions that exploit the special structure of particular problems, as we mentioned above, due to the non-convexity and non-concavity of the tolerance constraints, the monotone and threshold structure of the optimal policy does not hold. The Whittle Index approach (used, for example, in \cite{zou2021minimizing,hsu2018age}) which have relatively low complexity also does not apply to our multi-channel scheduling problems since each user \ati{in our setting is allowed} to transmit over multiple channels \ati{simultaneously}, whereby the Whittle's Indexability condition does not hold. \ati{Using the generally applicable} LP-based approach \ati{reveals} key insights \ati{that can} guide the designers in developing efficient schedulers for future multi-channel wireless technologies.}

\section{Online Scheduling under Unknown Channel Statistics}
\label{sec:online}

Until this point, we have assumed that the channel success probabilities are known when solving the optimization problems. In this section, we use a Lyapunov-drift-plus-penalty approach(see \cite{neely2010stochastic}) to solve the multi-user online age related optimization problem in the scenario when only the current channel states are known, but the channel statistics are unknown.

We will transfer all the energy and age-related constraints into the virtual queues and view the objective as a penalty term with parameter $M$. For the energy constraint of the source $i$, let us define the corresponding virtual queue as $Q_{1,i}[t]$, whose
initial value is $Q_{1,i}[0]=0$ and  update equation is:
$$Q_{1,i}[t+1]=\left(Q_{1,i}[t]+ \dU_i\left(\bm{\Age}[t]\right)-b_{i}\right)^{+}.$$
Similarly, we define the virtual queue $Q_{2,k}[t]$ for the $k^{th}$ age-related constraint, whose initial value is $Q_{2,k}[0]=0$ and update equation is:
$$Q_{2,k}[t+1]=\left(Q_{2,k}[t]+ \omega_k\left(\bm{\Age}[t]\right)-c_k\right)^{+}.$$
Generically, if the virtual queue $Q_{1,i}[t]$ is stable, then its input rate $\dsp \lim _{T \rightarrow \infty} \frac{1}{T} \sum_{t=1}^{T} \mathbb{E}\left[\dU_i\left(\bm{\Age}[t]\right)\right]$ will be less than its output rate $b_i$ \cite{neely2010stochastic}, so that the corresponding constraint can be satisfied. 
Define the state of both virtual queues and age at time $t$ as $\bm{Q}[t]=(Q_{1,1}[t],\cdots,Q_{1,n}[t],Q_{2,1}[t],\cdots,Q_{2,K}[t], \bm{A}[t]).$ Based on the virtual queues, we will define the quadratic Lyapunov function as:
$$V[t]=\frac{1}{2}(\sum_{i=1}^n Q_{1,i}^2[t]+\sum_{k=1}^K Q_{2,k}^2[t]),$$
and develop an online algorithm to greedily minimize the upper bound of the Lyapunov-drift-plus-penalty function $\Delta V(\mathbf{q})+M\mathbb{E}[\omega_0(\bm{a})]$ given the current state $\bm{q}=(q_{1,1},\cdots,q_{1,n},q_{2,1},\cdots,q_{2,K},\bm{a}),$ where:
$$\Delta V(\mathbf{q})=\mathbb{E}[V[t]-V[t-1]|\bm{Q}[t]=\bm{q}].$$

We consider the multi-user single-channel scheduling problem under tolerance constraints as a specific example to present the design. Since there are no energy constraints, we do not need the set of virtual queues $\{Q_{1,i}[t]\}_i$. In order to express the $k^{th}$ violation rate constraint for source $k=1,\cdots,n$, we let    $\omega_k\left(\bm{\Age}[t]\right)=\mathbb{1}\left(A_{k}[t+1]>\Tr_{k}\right)$ and $c_k=\Tol_k$. 
Then the virtual queue $Q_{2,k}[t]$, whose initial value is $Q_{2,k}[t]=0$, updates as follows:
$$Q_{2,k}[t+1]=\left(Q_{2,k}[t]+ \mathbb{1}\left(A_{k}[t+1]>\Tr_{k}\right)-\Tol_{k}\right)^{+},$$
where $A_{k}[t+1]=1+A_{k}[t](1-S_{k}[t] U_{k}[t])$; $S_{k}[t]$ represents the channel success; $U_{k}[t]$ represents whether the source is scheduled to transmit or not. If virtual queue $Q_{2,k}[t]$ is stable, its input rate,  the threshold violation rate $\spi_k(\Tr_k+1)=\lim_{T \rightarrow \infty} \frac{1}{T} \sum_{t=1}^{T} \mathbb{1}\left(A_{k}[t+1]>\Tr_{k}\right),$ will be less than its output rate $\Tol_{k}.$ 

The conditional Lyapunov drift can be bounded as follows:
\begin{equation*}
    \begin{array}{ll}
         &\Delta V(\mathbf{q})\\
         \leq &\dsp\sum_{k=1}^n q_{2,k} \mathbb{E}\left[R_{k}-\Tol_{k} | q_{2,k}\right]
         + \sum_{k=1}^{n} \mathbb{E}\left[\frac{\left(R_{k}-\Tol_{k}\right)^{2}}{2} | q_{2,k}\right],
    \end{array}
\end{equation*}
where $R_{k} \stackrel{\Delta}=\mathbb{1}\{1+A_{k}\left(1-S_{k} C_{k}\right)>\Tr_k\}.$
At every time slot $t,$ we can develop an online algorithm as summarized below to greedily minimize the upper bound of the Lyapunov drift given the queue lengths $\bm{Q}[t-1]$ and $\bm{A}[t-1]$ since there is no objective or penalty term in this case. 
\begin{algorithm}
	\caption{A Heuristic Scheduling Policy} 
	\begin{algorithmic}[1]
		\State Input current system state: $A_i[t]$,$Q_i[t]$.
		\State Define available transmission decision set: only one $U_i[t]$ can be $1$.
		\State Choose $\bm{U}[t]$ to minimize the upper bound of Lyapunov drift function in the above inequality.
		\State Update queue lengths for next time slot.
	\end{algorithmic} 
\end{algorithm}

Again, for the sake of easy visualization, we will only present the simulation results for the two-user online scheduling problem under age tolerance constraints, but the online algorithm can be simply applied to any number of sources. The simulation results are illustrated in Fig~\ref{fig:U2Region_sym} and Fig~\ref{fig:U2Region_asm} for different parameters where the upper right area of the dash-dot purple line is the stability region of the online scheduler when the channel condition $\cm_i$. The comparison will be in the next section.
\section{Comparison of Stability Regions under age violation constraints}
\label{sec:compare}
In this section, we compare the performance of three different algorithms for the two-user single channel scheduling feasibility problem under age violation tolerance constraints. These are: the optimal scheduler from Section~\ref{sec:multi}; the prior design from \cite{Li2021Soft} developed for a single-channel multi-user setting; and our online scheduler from Section~\ref{sec:online} that does not require channel statistics. 

\begin{figure}[b!]
\begin{center}
\vspace{-0.15in}
\includegraphics[height=5cm]{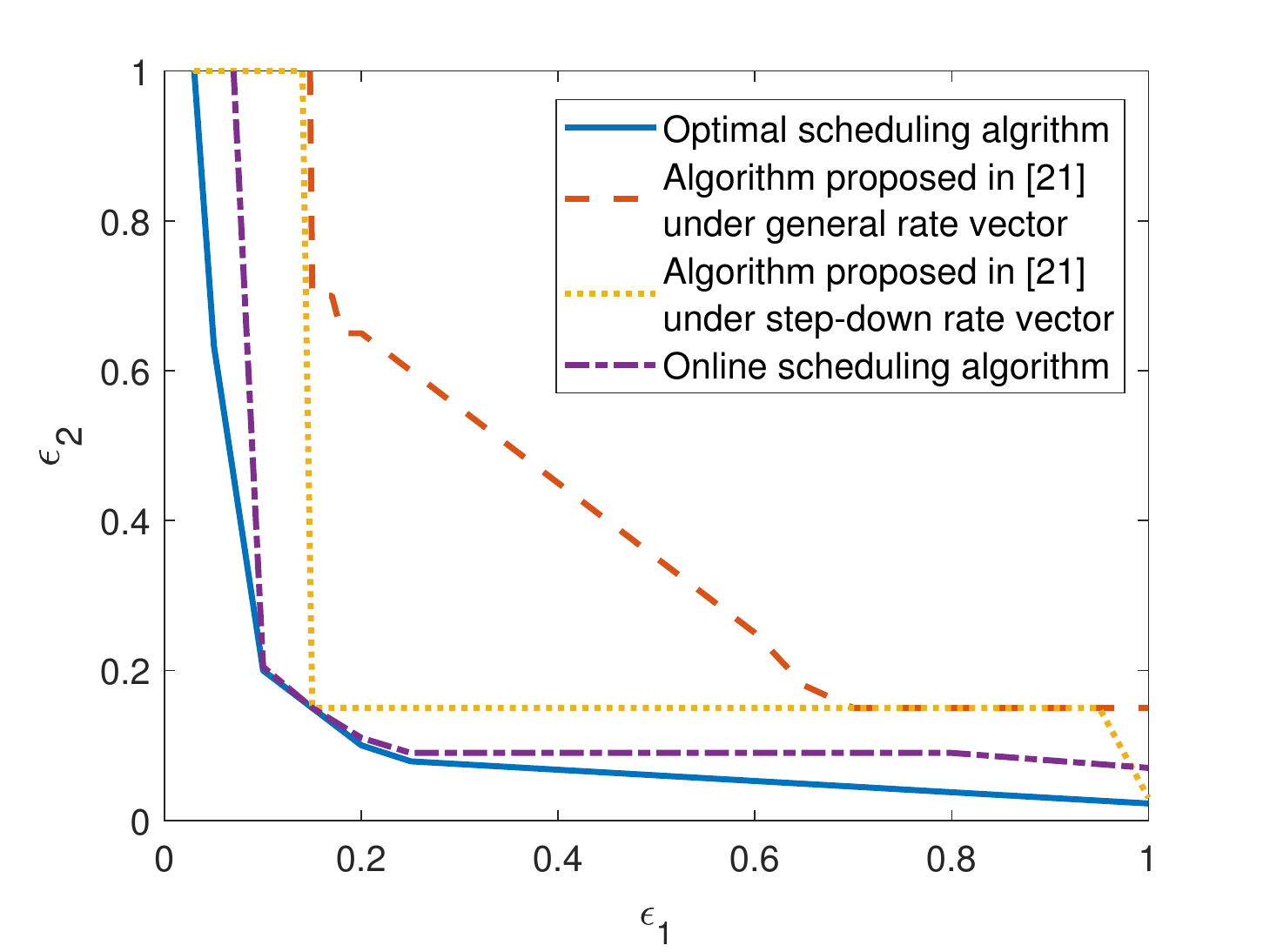}
\vspace{-0.15in}
\caption{Stability region (upper-righter) comparison for symmetric case.}
\vspace{-0.15in}
\label{fig:U2Region_sym}
\end{center}
\end{figure}

We first focus on the case when the two source nodes are symmetric. In Figure~\ref{fig:U2Region_sym}, there are two source nodes with the same age thresholds of $\Tr_1=\Tr_2=2$ and the same channel success probabilities of $\cm_1=\cm_2=0.85$. The upper right area of the blue line is the stability region for the optimal scheduling algorithm in Section~\ref{sec:U2opt}. The yellow and orange lines correspond to the algorithm in \cite{Li2021Soft} and capture the two cases when the rate vector does or does not possess a special property (called step-down rate vector). The purple line marks the stability region for the online algorithm when the channel conditions $\cm_1, \cm_2$ are unknown. 
Several observations are in order from these simulation results: 
\begin{enumerate}[label=(\roman*)]
\item The stability regions are all symmetric, as can be expected due to the homogeneous deadline thresholds and channel conditions.
\item The optimum policy (blue line) outperforms other 
policies, with markedly better performance in cases where the tolerance levels are greatly different from each other. 
\item The online algorithm (purple line) performs very closely to the optimal policy, experiencing a small performance loss only at some extreme range of tolerance levels. 
\item When compared with the algorithms from \cite{Li2021Soft}(yellow and red lines), the online algorithm performs particularly better when one of the tolerance rates is smaller than the corresponding channel loss probability, as observed by the vertical gap between purple and yellow lines. 
\item The online and optimal policies are continuous with respect to the tolerance level, which eliminates the need to check if the tolerance rate vector satisfies certain properties, such as the step-down rate condition in \cite{Li2021Soft}. 
\end{enumerate}

To compare the advantages and disadvantages of the algorithms under non-homogeneous scenarios, in Figure~\ref{fig:U2Region_asm}, we consider two source nodes with asymmetric age thresholds of $\Tr_1=2, \Tr_2=4$ and a common channel success probability of $\cm_1=\cm_2=0.85$. 
Since the violation rate depends on both the age thresholds and the channel success probabilities, this is a non-homogeneous scenario even though $\cm_1=\cm_2.$
In this figure, in contrast to the previous figure, we can further see that the optimal policy outperforms others when one of the tolerance constraints is very strict, namely when $\Tol_1$ approaches $1$. In this regime, the feasible tolerance level $\Tol_2$ of user $2$ other algorithms is bounded away from zero while the optimal algorithm decreases towards zero. 

\begin{figure}[t!]
\begin{center}
\vspace{-0.15in}
\includegraphics[height=5cm]{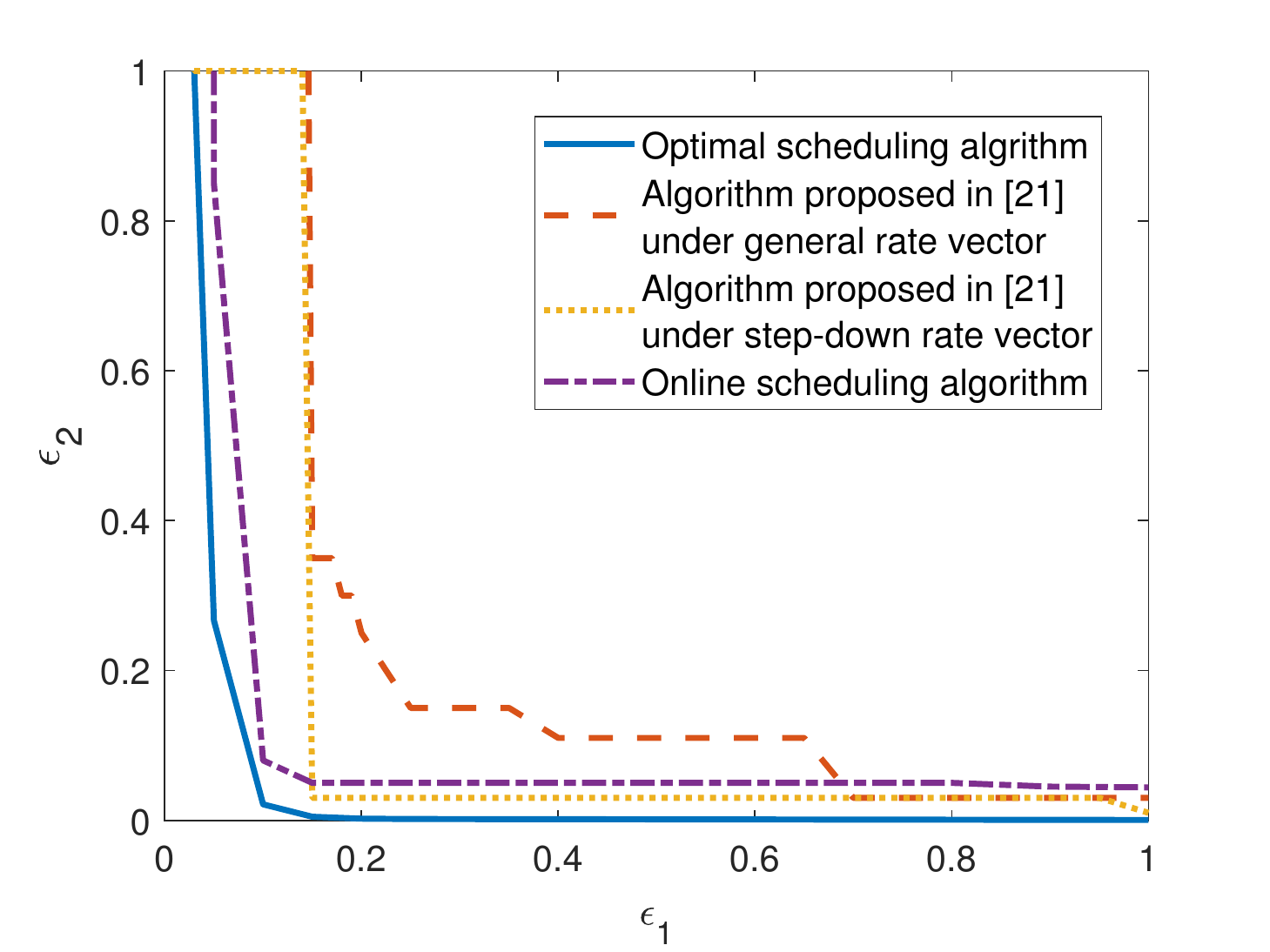}
\vspace{-0.15in}
\caption{Stability region (upper-righter) comparison for asymmetric case.}
\vspace{-0.15in}
\label{fig:U2Region_asm}
\end{center}
\end{figure}

These simulation results are typical of other circumstances, with the common observation that our online scheduler performs close to the optimal scheduler and typically non-negligibly better than the most closely related state-of-art algorithm from \cite{Li2021Soft}, despite the fact that it operates without the knowledge of channel statistics that is assumed in the other designs. 

\section{Conclusions}

In this paper, we considered a general class of age-optimal scheduling problems for multi-source multi-channel communication. We formulated the generic age-optimization problem with flexible weight functions $\omega_k$ under energy and tolerance constraints in the form of a CMDP. We solved this generic problem, {\color{blue}which a usual \ati{threshold-based} structure policy does not apply,} by relating it to the solution an associated linear programming problem using the powerful theory of CMDPs. 
Then, we focused on the special case of single-source multi-channel scenario to investigate the characteristics of optimal scheduler for the important special cases of average-age
and violation-rate minimization. 

Our investigations revealed several interesting insights, including the observation that age-violation-rate minimizing scheduler employs a super-linearly like growing energy allocation strategy with increasing age, as opposed to the sub-linearly like growing allocation for the average-age-minimizing scheduler. These insights may provide useful guidelines for IoT network designers in developing effective update strategies based on different sensitivities of applications to age performance. 

We also studied the special case of multi-source single-channel scheduling problem with age violation rate constraints to investigate the feasibility region of the optimal scheduler together with that of most closely related prior works. Finally, we have developed an online scheduler that does not require the knowledge of channel statistics, and compared its performance to the optimal scheduler through simulations to observe that it performs closely to the optimal scheduler despite its lack of information on channel statistics. 



\newpage 
\bibliographystyle{IEEEtran}
\bibliography{IEEEabrv,main}

\begin{thebibliography}{10}
\providecommand{\url}[1]{#1}
\csname url@samestyle\endcsname
\providecommand{\newblock}{\relax}
\providecommand{\bibinfo}[2]{#2}
\providecommand{\BIBentrySTDinterwordspacing}{\spaceskip=0pt\relax}
\providecommand{\BIBentryALTinterwordstretchfactor}{4}
\providecommand{\BIBentryALTinterwordspacing}{\spaceskip=\fontdimen2\font plus
\BIBentryALTinterwordstretchfactor\fontdimen3\font minus
  \fontdimen4\font\relax}
\providecommand{\BIBforeignlanguage}[2]{{%
\expandafter\ifx\csname l@#1\endcsname\relax
\typeout{** WARNING: IEEEtran.bst: No hyphenation pattern has been}%
\typeout{** loaded for the language `#1'. Using the pattern for}%
\typeout{** the default language instead.}%
\else
\language=\csname l@#1\endcsname
\fi
#2}}
\providecommand{\BIBdecl}{\relax}
\BIBdecl

\bibitem{federal2016fcc}
F.~C. Commission \emph{et~al.}, ``Fcc adopts rules to facilitate next
  generation wireless technologies,'' \emph{FCC, July}, vol.~14, 2016.

\bibitem{kaul2011minimizing}
S.~Kaul, M.~Gruteser, V.~Rai, and J.~Kenney, ``Minimizing age of information in
  vehicular networks,'' in \emph{2011 8th Annual IEEE communications society
  conference on sensor, mesh and ad hoc communications and networks}.\hskip 1em
  plus 0.5em minus 0.4em\relax IEEE, 2011, pp. 350--358.

\bibitem{kaul2012real}
S.~Kaul, R.~Yates, and M.~Gruteser, ``Real-time status: How often should one
  update?'' in \emph{2012 Proceedings IEEE INFOCOM}.\hskip 1em plus 0.5em minus
  0.4em\relax IEEE, 2012.

\bibitem{kosta2017age}
A.~Kosta, N.~Pappas, V.~Angelakis \emph{et~al.}, ``Age of information: A new
  concept, metric, and tool,'' \emph{Foundations and Trends{\textregistered} in
  Networking}, vol.~12, no.~3, pp. 162--259, 2017.

\bibitem{chen2020age}
H.~Chen, Y.~Gu, and S.-C. Liew, ``Age-of-information dependent random access
  for massive iot networks,'' in \emph{INFOCOM WKSHPS}.\hskip 1em plus 0.5em
  minus 0.4em\relax IEEE, 2020, pp. 930--935.

\bibitem{Zhou2022Efficient}
X.~Zhou, I.~Koprulu, A.~Eryilmaz, and M.~J. Neely, ``Efficient distributed mac
  for dynamic demands: Congestion and age based designs,'' \emph{IEEE/ACM
  Transactions on Networking}, pp. 1--14, 2022.

\bibitem{abolhassani2022fresh}
B.~Abolhassani, J.~Tadrous, A.~Eryilmaz, and E.~Yeh, ``Fresh caching of dynamic
  content over the wireless edge,'' \emph{IEEE/ACM Transactions on Networking},
  2022.

\bibitem{liu2019proactive}
R.~Liu, E.~Yeh, and A.~Eryilmaz, ``Proactive caching for low access-delay
  services under uncertain predictions,'' \emph{Proceedings of the ACM on
  Measurement and Analysis of Computing Systems}, vol.~3, no.~1, 2019.

\bibitem{tang2019scheduling}
H.~Tang, J.~Wang, L.~Song, and J.~Song, ``Scheduling to minimize age of
  information in multi-state time-varying networks with power constraints,'' in
  \emph{2019 57th Annual Allerton Conference on Communication, Control, and
  Computing (Allerton)}.\hskip 1em plus 0.5em minus 0.4em\relax IEEE, 2019, pp.
  1198--1205.

\bibitem{sombabu2018age}
B.~Sombabu and S.~Moharir, ``Age-of-information aware scheduling for
  heterogeneous sources,'' in \emph{Proceedings of the 24th Annual
  International Conference on Mobile Computing and Networking}, 2018, pp.
  696--698.

\bibitem{li2020age}
M.~Li, C.~Chen, H.~Wu, X.~Guan, and X.~Shen, ``Age-of-information aware
  scheduling for edge-assisted industrial wireless networks,'' \emph{IEEE
  Transactions on Industrial Informatics}, vol.~17, no.~8, pp. 5562--5571,
  2020.

\bibitem{jhunjhunwala2018age}
P.~R. Jhunjhunwala and S.~Moharir, ``Age-of-information aware scheduling,'' in
  \emph{SPCOM}.\hskip 1em plus 0.5em minus 0.4em\relax IEEE, 2018.

\bibitem{han2020joint}
D.~Han, W.~Chen, and Y.~Fang, ``Joint channel and queue aware scheduling for
  latency sensitive mobile edge computing with power constraints,'' \emph{IEEE
  Transactions on Wireless Communications}, vol.~19, no.~6, pp. 3938--3951,
  2020.

\bibitem{hu2021status}
L.~Hu, Z.~Chen, Y.~Dong, Y.~Jia, L.~Liang, and M.~Wang, ``Status update in iot
  networks: Age-of-information violation probability and optimal update rate,''
  \emph{IEEE Internet of Things Journal}, vol.~8, no.~14, 2021.

\bibitem{pappas2019delay}
N.~Pappas and M.~Kountouris, ``Delay violation probability and age of
  information interplay in the two-user multiple access channel,'' in
  \emph{20th International Workshop on SPAWC}.\hskip 1em plus 0.5em minus
  0.4em\relax IEEE, 2019, pp. 1--5.

\bibitem{chen2020minimizing}
M.~Chen, Y.~Xiao, Q.~Li, and K.-c. Chen, ``Minimizing age-of-information for
  fog computing-supported vehicular networks with deep q-learning,'' in
  \emph{ICC 2020-2020 IEEE International Conference on Communications
  (ICC)}.\hskip 1em plus 0.5em minus 0.4em\relax IEEE, 2020, pp. 1--6.

\bibitem{sun2019sampling}
Y.~Sun and B.~Cyr, ``Sampling for data freshness optimization: Non-linear age
  functions,'' \emph{Journal of Communications and Networks}, vol.~21, no.~3,
  pp. 204--219, 2019.

\bibitem{emara2020spatiotemporal}
M.~Emara, H.~Elsawy, and G.~Bauch, ``A spatiotemporal model for peak aoi in
  uplink iot networks: Time versus event-triggered traffic,'' \emph{IEEE
  internet of things journal}, vol.~7, no.~8, pp. 6762--6777, 2020.

\bibitem{li2017emulating}
B.~Li, A.~Eryilmaz, and R.~Srikant, ``Emulating round-robin in wireless
  networks,'' in \emph{Proceedings of the 18th ACM International Symposium on
  Mobile Ad Hoc Networking and Computing}, 2017, pp. 1--10.

\bibitem{dong2019age}
Y.~Dong, Z.~Chen, S.~Liu, P.~Fan, and K.~B. Letaief, ``Age-upon-decisions
  minimizing scheduling in internet of things: To be random or to be
  deterministic?'' \emph{IEEE Internet of Things Journal}, vol.~7, no.~2, 2019.

\bibitem{Li2021Soft}
C.~Li, Q.~Liu, S.~Li, Y.~Chen, Y.~T. Hou, and W.~Lou, ``On scheduling with aoi
  violation tolerance,'' in \emph{IEEE INFOCOM}, 2021, pp. 1--9.

\bibitem{song2021analysis}
M.~Song, H.~H. Yang, H.~Shan, J.~Lee, H.~Lin, and T.~Q. Quek, ``Analysis of aoi
  violation probability in wireless networks,'' in \emph{17th ISWCS}.\hskip 1em
  plus 0.5em minus 0.4em\relax IEEE, 2021.

\bibitem{devassy2018delay}
R.~Devassy, G.~Durisi, G.~C. Ferrante, O.~Simeone, and E.~Uysal-Biyikoglu,
  ``Delay and peak-age violation probability in short-packet transmissions,''
  in \emph{ISIT}.\hskip 1em plus 0.5em minus 0.4em\relax IEEE, 2018, pp.
  2471--2475.

\bibitem{hu2020optimal}
L.~Hu, Z.~Chen, Y.~Dong, Y.~Jia, M.~Wang, L.~Liang, and C.~Chen, ``Optimal
  status update in iot systems: An age of information violation probability
  perspective,'' in \emph{VTC2020-Fall}.\hskip 1em plus 0.5em minus 0.4em\relax
  IEEE, 2020, pp. 1--5.

\bibitem{altman1999constrained}
E.~Altman, \emph{Constrained Markov decision processes: stochastic
  modeling}.\hskip 1em plus 0.5em minus 0.4em\relax Routledge, 1999.

\bibitem{dolgov2005stationary}
D.~A. Dolgov and E.~H. Durfee, ``Stationary deterministic policies for
  constrained mdps with multiple rewards, costs, and discount factors,'' in
  \emph{IJCAI}, vol.~19.\hskip 1em plus 0.5em minus 0.4em\relax Citeseer, 2005,
  pp. 1326--1331.

\bibitem{satija2020constrained}
H.~Satija, P.~Amortila, and J.~Pineau, ``Constrained markov decision processes
  via backward value functions,'' in \emph{International Conference on Machine
  Learning}.\hskip 1em plus 0.5em minus 0.4em\relax PMLR, 2020, pp. 8502--8511.

\bibitem{neely2010stochastic}
M.~J. Neely, ``Stochastic network optimization with application to
  communication and queueing systems,'' \emph{Synthesis Lectures on
  Communication Networks}, vol.~3, no.~1, pp. 1--211, 2010.

\bibitem{hsu2018age}
Y.-P. Hsu, ``Age of information: Whittle index for scheduling stochastic
  arrivals,'' in \emph{ISIT}.\hskip 1em plus 0.5em minus 0.4em\relax IEEE,
  2018, pp. 2634--2638.

\bibitem{tripathi2017age}
V.~Tripathi and S.~Moharir, ``Age of information in multi-source systems,'' in
  \emph{GLOBECOM}.\hskip 1em plus 0.5em minus 0.4em\relax IEEE, 2017, pp. 1--6.

\bibitem{zou2021minimizing}
Y.~Zou, K.~T. Kim, X.~Lin, and M.~Chiang, ``Minimizing age-of-information in
  heterogeneous multi-channel systems: A new partial-index approach,'' in
  \emph{Proceedings of the Twenty-second International Symposium on Theory,
  Algorithmic Foundations, and Protocol Design for Mobile Networks and Mobile
  Computing}, 2021, pp. 11--20.

\bibitem{elgabli2019reinforcement}
A.~Elgabli, H.~Khan, M.~Krouka, and M.~Bennis, ``Reinforcement learning based
  scheduling algorithm for optimizing age of information in ultra reliable low
  latency networks,'' in \emph{ISCC}.\hskip 1em plus 0.5em minus 0.4em\relax
  IEEE, 2019, pp. 1--6.

\bibitem{li2020learning}
M.~Li, C.~Chen, C.~Hua, and X.~Guan, ``Learning-based autonomous scheduling for
  aoi-aware industrial wireless networks,'' \emph{IEEE Internet of Things
  Journal}, vol.~7, no.~9, pp. 9175--9188, 2020.

\bibitem{tang2020minimizing}
H.~Tang, J.~Wang, L.~Song, and J.~Song, ``Minimizing age of information with
  power constraints: Multi-user opportunistic scheduling in multi-state
  time-varying channels,'' \emph{IEEE Journal on Selected Areas in
  Communications}, vol.~38, no.~5, pp. 854--868, 2020.

\end{thebibliography}

\end{document}